 \documentclass[smallabstract,smallcaptions]{dccpaper}

\usepackage{epsfig}
\usepackage{citesort}
\usepackage{amsmath}
\usepackage{amssymb}
\usepackage{color}
\usepackage{url}
\usepackage{float}
\usepackage{amsthm}

\DeclareMathOperator{\var}{var}
\DeclareMathOperator{\std}{std}

\newlength{\figurewidth}
\newlength{\smallfigurewidth} 

\floatstyle{ruled}
\newfloat{algorithm}{tbp}{loa}
\providecommand{\algorithmname}{Algorithm}
\floatname{algorithm}{\protect\algorithmname}

\theoremstyle{plain}
\newtheorem{thm}{\protect\theoremname}
\newtheorem{prop}[thm]{\protect\propositionname}
\providecommand{\propositionname}{Proposition}
\providecommand{\theoremname}{Theorem}

\setlength{\smallfigurewidth}{2.75in}
\setlength{\figurewidth}{6in}

\begin{document}

\title
{\large
\textbf{Recover Subjective Quality Scores from Noisy Measurements}
}


\author{%
Zhi Li$^{\ast}$ and Christos G. Bampis$^{\dag}$\\[0.5em]
{\small\begin{minipage}{\linewidth}\begin{center}
\begin{tabular}{ccc}
$^{\ast}$Netflix & \hspace*{0.5in} &$^{\dag}$Department of ECE\\
100 Winchester Circle  && University of Texas at Austin\\
Los Gatos, CA 95032, USA && Austin, TX 78712, USA\\
\url{zli@netflix.com} && \url{bampis@utexas.edu}
\end{tabular}
\end{center}\end{minipage}}
}

\maketitle
\thispagestyle{empty}

\begin{abstract}
Simple quality metrics such as PSNR are known to not correlate well with subjective
quality when tested across a wide spectrum of video content or quality regime.
Recently, efforts have been made in designing objective quality metrics
trained on subjective data (e.g. VMAF), demonstrating better correlation with
video quality perceived by human. Clearly, the accuracy of
such a metric heavily depends on the quality of the subjective data
that it is trained on. In this paper, we propose a new approach to
recover subjective quality scores from noisy raw measurements, using maximum likelihood estimation, by
jointly estimating the subjective quality of impaired videos, the
bias and consistency of test subjects, and the ambiguity of video
contents all together. We also derive closed-from expression for the confidence interval of each estimate. 
Compared to previous methods which partially exploit
the subjective information, our approach is able to exploit
the information in full, yielding tighter confidence interval and better handling of outliers without 
the need for z-scoring or subject rejection. It also handles missing data 
more gracefully. Finally, as side information, it provides interesting insights
on the test subjects and video contents.
\end{abstract}

\section{Introduction}

In video coding research and development, two methods have generally
been used to evaluate the quality of impaired videos: subjective assessment
through viewer experiments and objective assessment using quality
metrics. Subjective assessment is the ultimate measure of viewer's
perception of quality, but is usually expensive to conduct. Very often,
objective assessment is used as an alternative or complement to report
perceptual quality. Peak-signal-to-noise-ratio (PSNR) and Structural
Similarity Index (SSIM)~\cite{ssim} are examples of objective quality
metrics originally designed for images but later extended to video.
Besides, an objective quality metric can also be used as an optimization
objective function, such as in per-title encode optimization~\cite{cbe-techblog}
and per-scene encode optimization under bitrate and video buffer constraints.

Simple metrics such as PSNR find success in evaluating small differences
and close performance among codecs or coding tools. However, it is
well known that they do not correlate well with quality perceived by human
when tested across a wide spectrum of quality or content.
Recently, efforts have been made in addressing this issue by designing
objective quality metrics through \emph{subjective data fusion} (VMAF, FVQA, VQM-VFD)~\cite{vmaf-techblog,joelin2,vqmvfd}. 
The basic idea is to extract low-level features or
elementary metrics that are quality-indicative, and then use a machine-learning
regressor, such as a support vector machine (SVM)~\cite{svm} or a neural net,
to fuse them into a ``meta-metric'' that makes a final prediction.
The regressor model is trained using subjective data, such as mean opinion scores (MOS) 
aggregated over the raw opinion scores
collected from subjective experiments. It is shown that the fusion-based
approach correlates better with subjective data than other approaches~\cite{vmaf-techblog}. 

Clearly, the accuracy of a fusion-based metric heavily depends
on the quality of the subjective data that it is trained on. Thus,
it is very important to provide clean and reliable training data.
On the other hand, raw opinion scores offered by viewers are often noisy and
unreliable, due to the following reasons:\vspace{-0.10in}
\begin{itemize}
\item \emph{Subject bias}. The notion of quality is highly subjective and
test subjects are entitled to rate the videos in their own opinions. For example, more picky
viewers tend to be biased toward lower scores, and vice versa.
Also, not every subject has ``golden eyes'' -- their sensitivity
to impairments varies. \vspace{-0.1in}
\item \emph{Subject inconsistency}. Subjective testing is a laborious process,
and not every viewer can maintain attentiveness throughout. Some tend
to rate more consistently than others.\vspace{-0.1in}
\item \emph{Content ambiguity}. Some contents tend to be more difficult to be rated than others. For example,
water surface with ripples in the dark is more ambiguous than a bright
blue sky.\vspace{-0.1in}
\item \emph{Outliers}. Last but not least, some raw scores are simply outliers
-- viewers may just not pay attention. Software issues may also render
scores meaningless.\footnote{We have seen real examples where some raw scores get misaligned with
subjects and contents due to a software bug!} \vspace{-0.10in}
\end{itemize}
Existing approaches address some of the issues above. For example,
MOS averages the raw scores from a number of subjects to produce an
aggregate score, compensating for the bias and inconsistency of individuals.
Z-score transformation (or z-scoring)~\cite{vqstudy10} normalizes the scores on a per-subject basis. 
Subject rejection~\cite{bt500} counts the number
of instances when a subject's rating looks like an outlier, and if
this occurs too often, the subject and all his or her scores are rejected. 

In this paper, we propose a new approach to recover subjective quality
scores from the noisy raw opinion measurements, by \emph{jointly estimating}
the subjective quality of impaired videos, the bias and consistency
of test subjects, and the ambiguity of video contents all together.
We propose a generative model that incorporates random variables representing
each of these factors, cast the problem as maximum likelihood estimation
(MLE), and derive a solution based on belief propagation (BP)~\cite{mackay}. We also derive 
closed-form expression for the confidence interval of each estimate based on Cramer-Rao bound~\cite{cover2006elements}. Compared to 
previous methods which partially exploit the subjective information, our approach is able to exploit the information in full, yielding better handling of outliers without 
the need for z-scoring or subject rejection. The resulting estimated subjective scores have a tighter 
confidence interval compared to conventional approaches. It also handles missing data 
more gracefully. Lastly, as side information, it provides interesting insights
on the test subjects and video contents.

The rest of the paper is organized as follows. Section \ref{sec:Related-Work}
discusses related work. Section \ref{sec:Problem-Formulation} defines
the problem and notations. Section \ref{sec:Traditional-Approaches}
describes traditional approaches, and Section \ref{sec:Proposed-Approach}
presents the proposed approach. Experimental
results are reported in Section \ref{sec:Results}. 

This work's open-source implementation can be found at~\cite{mleoss}.

\section{Related Work\label{sec:Related-Work}}

The ITU-R BT.500 Recommendation~\cite{bt500} defines procedures for
video subjective testings including both single-stimulus and double-stimulus
methods. It also defines the method for subjective rejection. The
ITU-T P.910 Recommendation~\cite{p910} defines procedures for calculating
differential MOS. Z-score transformation of subjective scores
is proposed in \cite{vqstudy10} as a pre-processing step prior to subject
rejection.

A theoretical model for subjects' influence on test scores is proposed in~\cite{janowski15}, which is similar in spirit to ours. The authors have focused on validating the model using real data, which also provides support for this work. However, they have relied on repetitive experiments for model parameter estimation, while our proposed solution based on BP is much more efficient.

Recovering subjective quality scores from noisy measurements is closely
related to the task of label inference from very large databases of
hand labeled images. To address the label inference problem, \cite{whitehill09} proposes
a solution based on probablistic graphical model. \cite{wang15} further
extends the idea to the task of image quality evaluation. There is
a number of differences between their approach and this work. First,
they formulate a classification problem whereas this work considers
a regression problem, which is closer to the nature of subjective
test procedure considered. Second, their work adopts a discriminative
model whereas this work adopts a generative model, allowing better
interpretability of results.

\section{Preliminaries\label{sec:Problem-Formulation}}

Consider an experiment with $S$ \emph{subjects}, indexed by $s=1,...,S$,
and $E$ impaired \emph{video encodes}, indexed by $e=1,...,E$. 
For simplicity, we consider the case of a single viewing session with no repititions. A
subject $s$ rates an impaired video encode $e$, producing a raw
opinion score $x_{e,s}$. In the \emph{full sampling} scenario, every subject rates every impaired
video. In the \emph{selective sampling} scenario, not every subject needs
to rate every impaired video -- if a score is missing, it is denoted
by $x_{e,s}=*$. In the rest of the paper, unless otherwise stated,
the full sampling scenario is assumed.

The raw opinion scores may be produced using
different test methods, including both single-stimulus and double-stimulus
methods: In \emph{absolute category rating (ACR)}, the subject is
instructed to watch the impaired video and give a rating on the scale
from 1 (quality is bad) to 5 (quality is excellent). In \emph{degradation
category rating (DCR)}, the subject is instructed to watch a pair
of videos -- the unimpaired reference video followed by the impaired
video, and then rate the impaired video on the scale from 1 (impairments
are very annoying) to 5 (impairments are imperceptible). In either method, 
unimpaired \emph{hidden reference} videos may be
present along with other impaired videos, and are also rated by the
subject. A differential score between the score of the impaired and its hidden 
reference may be used in place of the raw opinion score, and the resulting MOS calculated is called
differential MOS (or DMOS)~\cite{p910}. DMOS is
useful when the quality degradation from the reference is relevant,
rather than the absolute quality when, for example, the reference video is not perfect in quality.

Each impaired video
$e$ is also associated with a \emph{content} $c$, denoted by $c=\mathsf{c}(e)$,
with $c=1,...,C$ where $C$ is the total number of video contents
used in the experiment.

Throughout the paper, we use $\mu_{e}$ and $\mu_{s}$ to denote the
mean values calculated over scores for impaired video $e$ and for
subject $s$, respectively, i.e., $\mu_{e}=\frac{1}{S}\sum_{s}x_{e,s}$
and $\mu_{s}=\frac{1}{E}\sum_{e}x_{e,s}$. Similarly, $m_{n,e}$ and
$m_{n,s}$ denote the $n$-th order central moment over scores for
$e$ and $s$, respectively, i.e., $m_{n,e}=\frac{1}{S}\sum_{s}(x_{e,s}-\mu_{e})^{n}$
and $m_{n,s}=\frac{1}{E}\sum_{e}(x_{e,s}-\mu_{s})^{n}$. As special
case, the standard deviation over scores for $e$ and $s$ have the
form $\sigma_{e}=\sqrt{m_{2,e}}$ and $\sigma_{s}=\sqrt{m_{2,s}}$,
respectively.

\section{Traditional Approaches\label{sec:Traditional-Approaches}}

The most basic approach to recover subjective quality scores from
the raw opinion measurements is by averaging, or MOS: $x_{e}=\mu_{e}$,
for impaired video $e=1,...,E$. Before the averaging step, one can apply the following:

\emph{Z-score transformation}. It is advocated that the
scores undergo a normalization procedure on a per-subject basis~\cite{vqstudy10},
i.e., $x_{e,s}\leftarrow\frac{x_{e,s}-\mu_{s}}{\sigma_{s}}$. Applying
z-scoring could compensate for individual subject's bias and inconsistency,
preparing for favorable conditions for subject rejection. However, after z-scoring, the original scale
of the scores is unfavorably lost, leading to difficulty in interpreting
the results. Also, it only partially compensates for the influence of subjects. A theoretical analysis 
on z-scoring can be found in Appendix \ref{sec:analysis-z-score}.

\emph{Subject rejection}. In \cite{bt500}, a recommendation for
subject rejection is provided. The algorithm is reproduced in Algorithm
\ref{subjreject} for completeness. Video by video, the algorithm
counts the number of instances when a subject's opinion score deviates
by a few sigmas, and reject the subject if the occurences are more
than a fraction. All scores corresponding to the rejected subjects
are discarded, which could be an overkill.

\begin{algorithm}
\begin{itemize}
\item Input: $x_{e,s}$ for $s=1,...,S$ and $e=1,...,E$.\vspace{-0.15in}
\item Initialize $p(s)\leftarrow0$ and $q(s)\leftarrow0$ for $s=1,...,S$.\vspace{-0.15in}
\item For $e=1,...,E$:
\begin{itemize}\vspace{-0.15in}
\item Let $Kurtosis_{e}=\frac{m_{4,e}}{m_{2,e}^{2}}$. \vspace{-0.15in}
\item If $2\leq Kurtosis_{e}\leq4$, then $\epsilon_{e}=2$; otherwise $\epsilon_{e}=\sqrt{20}$.\vspace{-0.15in}
\item For $s=1,...,S$:
\begin{itemize}\vspace{-0.15in}
\item If $x_{e,s}\geq\mu_{e}+\epsilon_{e}\sigma_{e}$, then $p(s)\leftarrow p(s)+1.$\vspace{-0.1in}
\item If $x_{e,s}\leq\mu_{e}-\epsilon_{e}\sigma_{e}$, then $q(s)\leftarrow q(s)+1$.\vspace{-0.1in}
\end{itemize}\vspace{-0.1in}
\end{itemize}\vspace{-0.15in}
\item Initialize $Set_{rej}=\emptyset$.\vspace{-0.15in}
\item For $s=1,...,S$:\vspace{-0.15in}
\begin{itemize}\vspace{-0.05in}
\item If $\frac{p(s)+q(s)}{E}\geq0.05$ and $\left|\frac{p(s)-q(s)}{p(s)+q(s)}\right|<0.3$,
then $Set_{rej}\leftarrow Set_{rej}\cup\{s\}.$\vspace{-0.15in}
\end{itemize}\vspace{-0.15in}
\item Output: $Set_{rej}$.\vspace{-0.15in}
\end{itemize}
\caption{Subject rejection~\cite{bt500}}
\label{subjreject}

\end{algorithm}

\section{Proposed Approach\label{sec:Proposed-Approach}}

In this section, we describe the proposed approach of jointly estimating
the subjective quality of impaired videos, the bias and consistency
of subjects, and the ambiguity of video contents. Different from traditional 
approaches which require an explicit z-score transformation or a subject 
rejection step, the proposed model naturally accounts for subjective 
biases, inconsistencies and outliers.

\subsection{The Model}

We model the raw opinion scores as a random variable $\{X_{e,s}\}$
with the following form:
\begin{eqnarray}
X_{e,s} & = & x_{e}+B_{e,s}+A_{e,s},\label{eq:proposed}\\
B_{e,s} & \sim & \mathcal{N}(b_{s},v_{s}^{2}),\nonumber \\
A_{e,s} & \sim & \mathcal{N}(0,a_{c:\mathsf{c}(e)=c}^{2})\nonumber 
\end{eqnarray}
for $e=1,...,E$ and $s=1,...,S$. In this model, $x_{e}$ represents
the quality of impaired video $e$ perceived by an \emph{average }viewer,
$B_{e,s}$, $e=1,...,E$ are i.i.d. Gaussian variables representing
the factor of subject $s$, and $A_{e,s}$, $s=1,...,S$ and $e:\mathsf{c}(e)=c$
are i.i.d. Gaussian variables representing the factor of video content
$c$ (i.e., the content that $e$ corresponds to). The parameters
$b_{s}$ and $v_{s}^{2}$ represent the bias (i.e., mean) and inconsistency
(i.e., variance) of subject $s$, $s=1,...,S$. The parameter $a_{c}^{2}$
represents the ambiguity (i.e., variance) of content $c$, $c=1,...,C$.
In this formulation, the unknowns are the model parameters $\theta=(\{x_{e}\},\{b_{s}\},\{v_{s}\},\{a_{c}\})$, where $\{\cdot\}$ denotes the corresponding set.
The main idea of our approach is to jointly recover these unknowns
by MLE. Let $L=\log P(\{x_{e,s}\}|\theta)$ be the log likelihood
function, the goal is to solve for $\hat{\theta}=\arg\max_{\theta}L$.
Our ultimate goal is to recover the estimated scores $\{x_{e}\}$ of 
a hypothetical unbiased and consistent viewer, while $\{b_{s}\}$, $\{v_{s}\}$ 
and $\{a_{c}\}$ are side information about the subjects and video contents. An 
analysis of the observations and unknowns in (\ref{eq:proposed}) and its recoverability can be found in Appendix \ref{sec:recoverability}.

\subsection{Belief Propagation}

We derive a solution for the MLE formulation using
BP algorithm. We start with the log-likelihood function. By (\ref{eq:proposed}),
$X_{e,s}$ is a sum of a constant and independent Gaussian variables,
thus $X_{e,s}$ is also Gaussian with $X_{e,s}\sim \mathcal{N}(x_{e}+b_{s},v_{s}^{2}+a_{\mathsf{c}(e)}^{2})$.
The log-likelihood function can be expressed as:
\begin{eqnarray}
L(\theta) & = & \log P(\{x_{e,s}\}|\theta)\nonumber \\
 & = & \log P(\{x_{e,s}\}|\{x_{e}\},\{b_{s}\},\{v_{s}\},\{a_{c}\})\nonumber \\
 & = & \log\prod_{e,s}P(x_{e,s}|x_{e},b_{s},v_{s},a_{\mathsf{c}(e)})\label{eq:indep}\\
 & = & \sum_{e,s}\log P(x_{e,s}|x_{e},b_{s},v_{s},a_{\mathsf{c}(e)})\nonumber \\
 & \equiv & \sum_{e,s}-\frac{1}{2}\log\left(v_{s}^{2}+a_{\mathsf{c}(e)}^{2}\right)-\frac{1}{2}\cdot\frac{(x_{e,s}-x_{e}-b_{s})^{2}}{v_{s}^{2}+a_{\mathsf{c}(e)}^{2}}\label{eq:gaussian}
\end{eqnarray}
where (\ref{eq:indep}) uses the independence assumption on opinion
scores and (\ref{eq:gaussian}) uses the Gaussian formula with omission
of the constant terms. With (\ref{eq:gaussian}), the first- and second-order
partial derivatives of $L(\theta)$ with respect to parameters $x_{e}$,
$b_{s}$, $v_{s}$ and $a_{c}$ can be derived. We then apply the
Newton-Raphson rule~\cite{mackay} $\upsilon\leftarrow\upsilon-\frac{\partial L/\partial\upsilon}{\partial^{2}L/\partial\upsilon^{2}}$
to update each parameter $\upsilon$ at a time in each iteration.
Note that other update rules are also possible, but using the Newton-Raphson
rule can yield nice expressions with interpretability. Also note that the BP algorithm finds a local optimal solution when
the problem is nonconvex. It is important to initialize the parameters
properly. We choose the MOS as the initial values for $\{x_{e}\}$,
zeros for $\{b_{s}\}$, and the standard deviation values $\{\sigma_{s}\}$
and $\{\sigma_{c}\}$ for $\{v_{s}\}$ and $\{a_{c}\}$, respectively,
where $\sigma_{c}^{2}=\frac{\sum_{s,e:\mathsf{c}(e)=c}(x_{e,s}-\mu_{s})^{2}}{\sum_{s,e:\mathsf{c}(e)=c}1}$. The BP algorithm solution for the proposed MLE formulation is summarized
in Algorithm \ref{em}. The analytical forms of the update rules are
derived in Appendix \ref{sec:update-rules}. A good choice of refresh
rate and stop threshold are $\alpha=0.1$ and $\Delta x^{thr}=1e^{-9}$,
respectively.

\begin{algorithm}
\begin{itemize}
\item Input: \vspace{-0.15in}
\begin{itemize}\vspace{-0.05in}
\item $x_{e,s}$ for $s=1,...,S$ and $e=1,...,E$.\vspace{-0.15in}
\item Refresh rate $\alpha$.\vspace{-0.15in}
\item Stop threshold $\Delta x^{thr}$.\vspace{-0.15in}
\end{itemize}
\item Initialize $\{x_{e}\}\leftarrow\{\mu_{e}\}$, $\{b_{s}\}\leftarrow\{0\}$,
$\{v_{s}\}\leftarrow\{\sigma_{s}\}$, $\{a_{c}\}\leftarrow\{\sigma_{c}\}$. \vspace{-0.15in}
\item Loop:\vspace{-0.15in}
\begin{itemize}\vspace{-0.05in}
\item $\{x_{e}^{prev}\}\leftarrow\{x_{e}\}$.\vspace{-0.15in}
\item $b_{s}\leftarrow(1-\alpha)\cdot b_{s}+\alpha\cdot b_{s}^{new}$ where
$b_{s}^{new}=b_{s}-\frac{\partial L(\theta)/\partial b_{s}}{\partial^{2}L(\theta)/\partial b_{s}^{2}}$
for $s=1,...,S$.\vspace{-0.15in}
\item $v_{s}\leftarrow(1-\alpha)\cdot v_{s}+\alpha\cdot v_{s}^{new}$ where
$v_{s}^{new}=v_{s}-\frac{\partial L(\theta)/\partial v_{s}}{\partial^{2}L(\theta)/\partial v_{s}^{2}}$
for $s=1,...,S$.\vspace{-0.15in}
\item $a_{c}\leftarrow(1-\alpha)\cdot a_{c}+\alpha\cdot a_{c}^{new}$ where
$a_{c}^{new}=a_{c}-\frac{\partial L(\theta)/\partial a_{c}}{\partial^{2}L(\theta)/\partial a_{c}^{2}}$
for $c=1,...,C$.\vspace{-0.15in}
\item $x_{e}\leftarrow(1-\alpha)\cdot x_{e}+\alpha\cdot x_{e}^{new}$ where
$x_{e}^{new}=x_{e}-\frac{\partial L(\theta)/\partial x_{e}}{\partial^{2}L(\theta)/\partial x_{e}^{2}}$
for $e=1,...,E$.\vspace{-0.15in}
\item If $\left(\sum_{e=1}^{E}(x_{e}-x_{e}^{prev})^{2}\right)^{\frac{1}{2}}<\Delta x^{thr}$,
break.\vspace{-0.15in}
\end{itemize}
\item Output: $\{x_{e}\}$, $\{b_{s}\}$, $\{v_{s}\}$, $\{a_{c}\}$.\vspace{-0.15in}
\end{itemize}
\caption{BP solution for the proposed MLE formulation}
\label{em}
\end{algorithm}

\subsection{Confidence Interval}

The estimate of each model parameter $\{x_{e}\}$, $\{b_{s}\}$, $\{v_{s}\}$ and $\{a_{c}\}$ is associated with a confidence interval, which can be derived
using the Cramer-Rao bound~\cite{cover2006elements}. The asymptotic normal $95\%$ confidence interval for a parameter $\theta$ has the form:
\begin{eqnarray}
\hat{\theta}\pm1.96\frac{1}{\sqrt{-\frac{\partial^{2}L(\hat{\theta})}{\partial\theta^{2}}}},\label{eq:ci}
\end{eqnarray}
where $\hat{\theta}$ is the MLE of $\theta$. The closed-form expression for the second-order partial derivatives of $L(\theta)$ is derived in (\ref{eq:xe2nd}) $\sim$ (\ref{eq:ac2nd}) of Appendix \ref{sec:update-rules}. The derivation of (\ref{eq:ci}) can be found in Appendix ~\ref{sec:confidence-interval}.

\subsection{Generalization to Selective Sampling\label{sec:Generalization-to-Selective}}

All the algorithms described so far, including the traditional approaches
and the proposed MLE formulation, assumed the full sampling scenario. It is
not difficult to generalize the algorithms to selective sampling. Simply exclude the missing
terms during summation, i.e., use $\sum_{e:x_{e,s}\neq*}f(x_{e,s})$ instead of $\sum_{e}f(x_{e,s})$, for some function $f(\cdot)$. The mean and central
moment terms now become: $\mu_{e}=\frac{\sum_{s:x_{e,s}\neq*}x_{e,s}}{\sum_{s:x_{e,s}\neq*}1}$,
$\mu_{s}=\frac{\sum_{e:x_{e,s}\neq*}x_{e,s}}{\sum_{e:x_{e,s}\neq*}1}$,
$m_{n,e}=\frac{\sum_{s:x_{e,s}\neq*}(x_{e,s}-\mu_{e})^{n}}{\sum_{s:x_{e,s}\neq*}1}$
and $m_{n,s}=\frac{\sum_{e:x_{e,s}\neq*}(x_{e,s}-\mu_{s})^{n}}{\sum_{e:x_{e,s}\neq*}1}$.

\section{Results\label{sec:Results}}

We set up a number of experiments to evaluate the performance of the
proposed MLE method and compare it with a number of traditional methods,
including the plain MOS, MOS with subject rejection (SR-MOS) and MOS
with z-scoring and subject rejection (ZS-SR-MOS). 


We use raw opinion scores from two datasets: the Netflix Public (NFLX)
dataset~\cite{nflx-public} and the VQEG HD3 (VQEG) dataset~\cite{vqeg}. Refer
to Figure \ref{fig:Raw-opinion-scores} for a visualization of the
raw scores. The NFLX dataset includes four subjects whose raw scores
were scrambled due to a software issue during data collection. The
VQEG dataset includes various contents (SRC01-09 excluding 04 which
overlaps with the NFLX dataset) and streaming-relevant impairments
(HRC04, 07 and 16-21). Note that the SRC06-HRC07 video received 
very low scores due to encoding issues.

\begin{figure}
\begin{minipage}[t]{0.499\columnwidth}%
\begin{center}
\includegraphics[width=1\columnwidth]{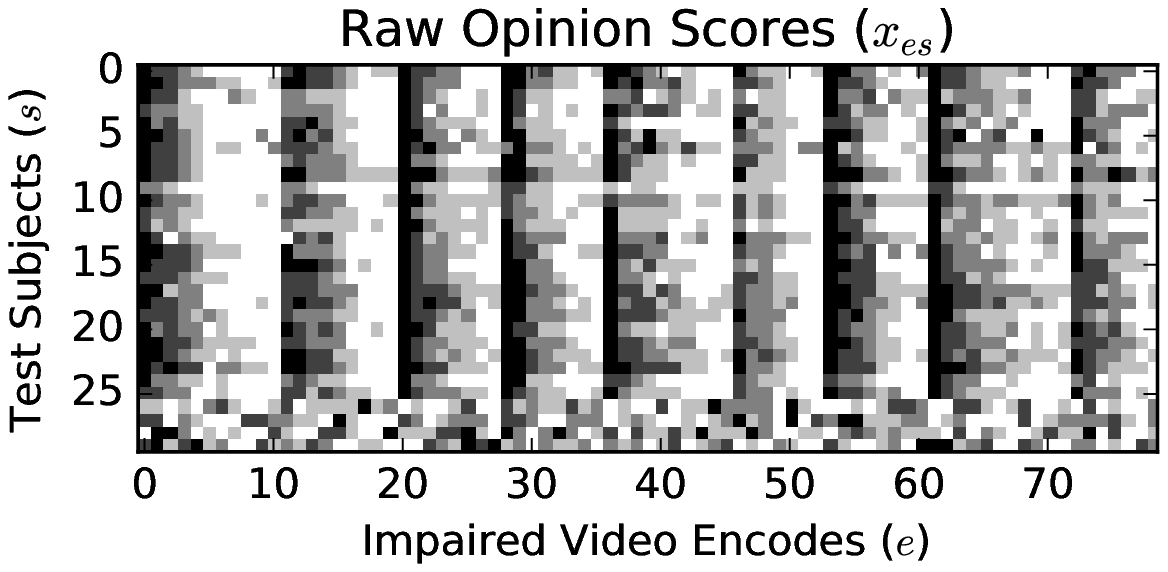}
\par\end{center}%
\end{minipage}%
\begin{minipage}[t]{0.499\columnwidth}%
\begin{center}
\includegraphics[width=1\columnwidth]{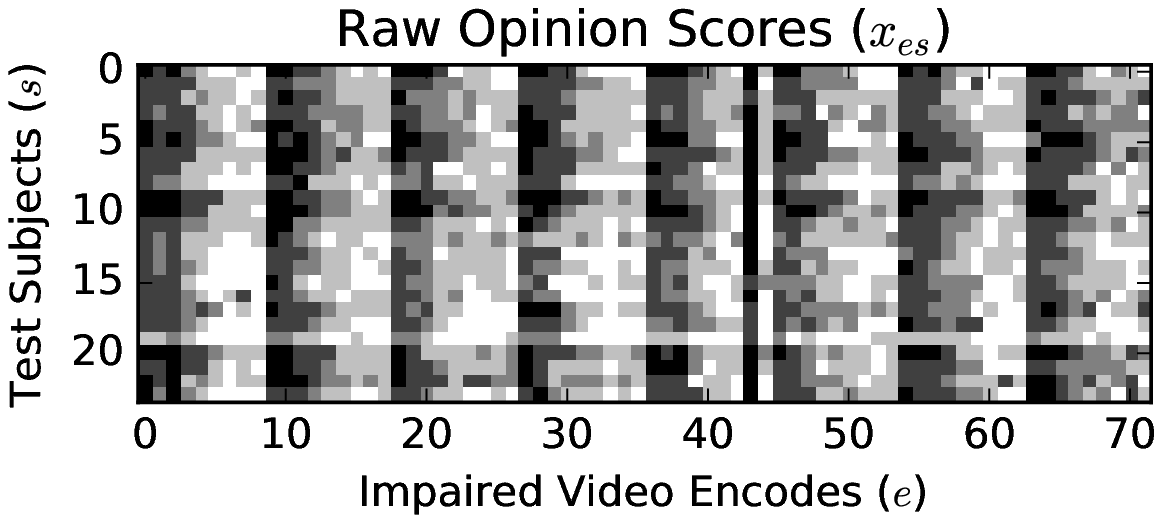}
\par\end{center}%
\end{minipage}

\vspace{-0.20in}

\caption{\label{fig:Raw-opinion-scores}Raw opinion scores from (left) the
NFLX dataset and (right) the VQEG dataset. Each pixel represents a
raw opinion score. The darker the color, the lower the score. The
impaired videos are arranged by contents, and within each content,
from low quality to high quality (with the reference video always
in the last). For the NFLX dataset, the last four rows correspond 
to corrupted subjective data.} 
\end{figure}

For many of the experiments, we do not have ``ground truth'' quality scores to compare against (we only
have noisy raw scores). Instead, we use the following methodology
in our report of results. For each recovery method, we have a benchmark
result, which is the recovered quality scores obtained \emph{using
that method} (for fairness) on an \emph{unaltered full dataset}. The
quality scores recovered under certain conditions (e.g., using a portion
of the raw scores, partially corrupted) is compared against the benchmark,
and a root-mean-squared-error (RMSE) value is reported. In doing so,
we could evaluate, for example, how fast the results converge toward
that benchmark result. In other experiments, we do have a ``ground
truth'', when artificially omitting a subject or creating a corruption on the data.

\subsection{An Example}

Let us first visually inspect one example result recovered by the
MLE and compare it with the MOS. Figure \ref{fig:Sampling-recovery-result}
shows the recovered parameters and their corresponding 95\% confidence intervals on the full NFLX dataset. Comparing
the result with Figure \ref{fig:Raw-opinion-scores} (left), a number of observations can be made: 
\begin{itemize}
\item The quality scores recovered by MLE are numerically different from the MOS, suggesting that the recovery is non-trivial. 
\vspace{-0.12in}\item The confidence intervals for the quality scores recovered by MLE is generally tighter, compared to the ones by MOS, suggesting an estimation with higher confidence. 
\vspace{-0.12in}\item Subject \#10 has the highest bias, which is evidenced by the whitish horizontal strip visible in Figure \ref{fig:Raw-opinion-scores} (left). 
\vspace{-0.12in}\item The last four subjects, whose raw scores were scrambled, have a very high $a_c$ value; correspondingly, their estimated bias have very loose confidence interval. \vspace{-0.15in}
\vspace{-0.15in}\item The content with the highest ambiguity is ElFuente2, which is the ``fountain and toddler'' scene, known to be difficult to evaluate. 
\end{itemize}
These observations demonstrate the potential of the MLE method on the problem at hand.

\begin{figure}
\begin{minipage}[t]{1\columnwidth}%
\begin{center}
\includegraphics[width=1\columnwidth]{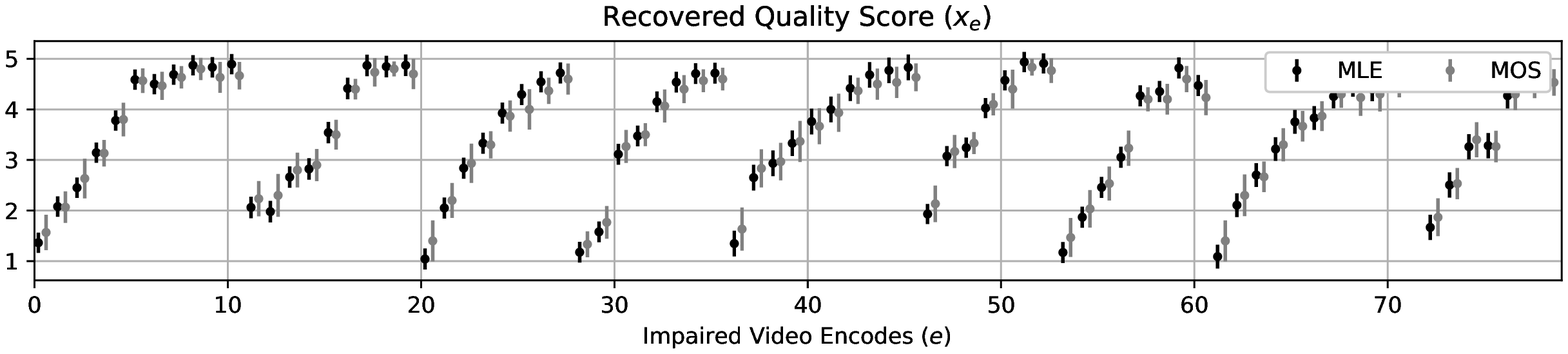}
\par\end{center}%
\end{minipage}

\begin{centering}
\vspace{-0.12in}
\par\end{centering}

\begin{minipage}[t]{0.499\columnwidth}%
\begin{center}
\includegraphics[width=1\columnwidth]{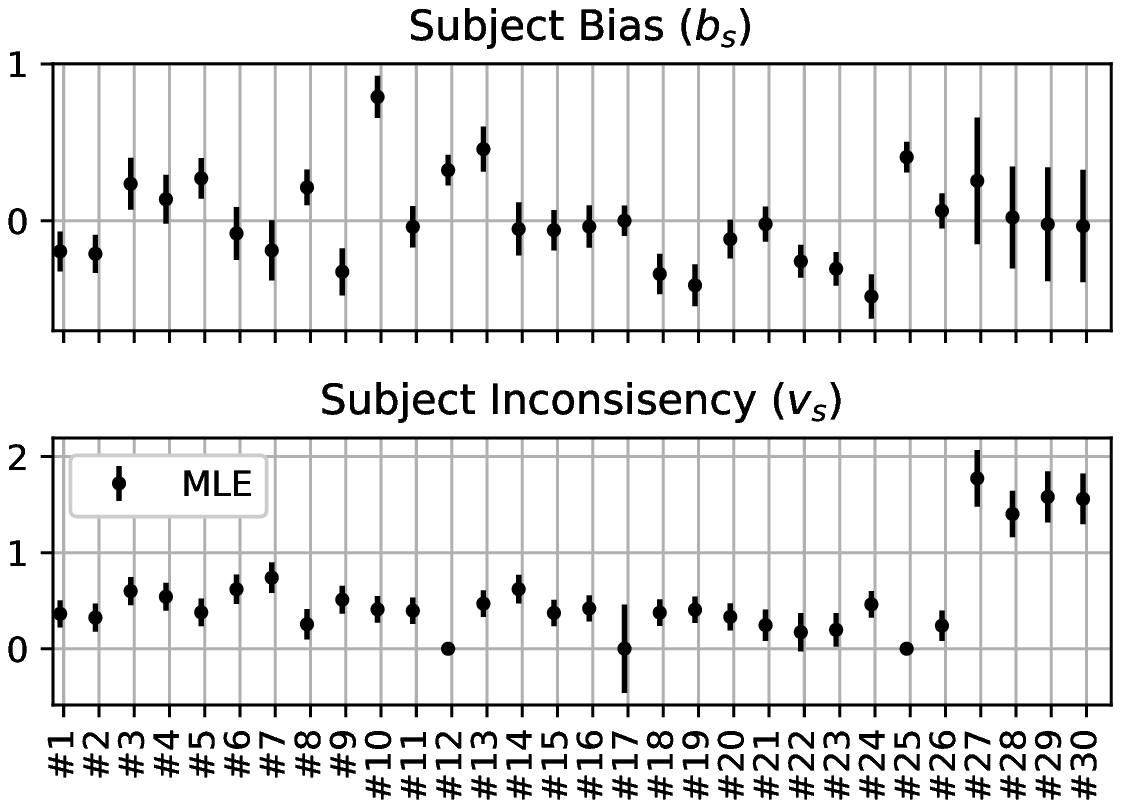}
\par\end{center}%
\end{minipage}%
\begin{minipage}[t]{0.499\columnwidth}%
\begin{center}
\includegraphics[width=1\columnwidth]{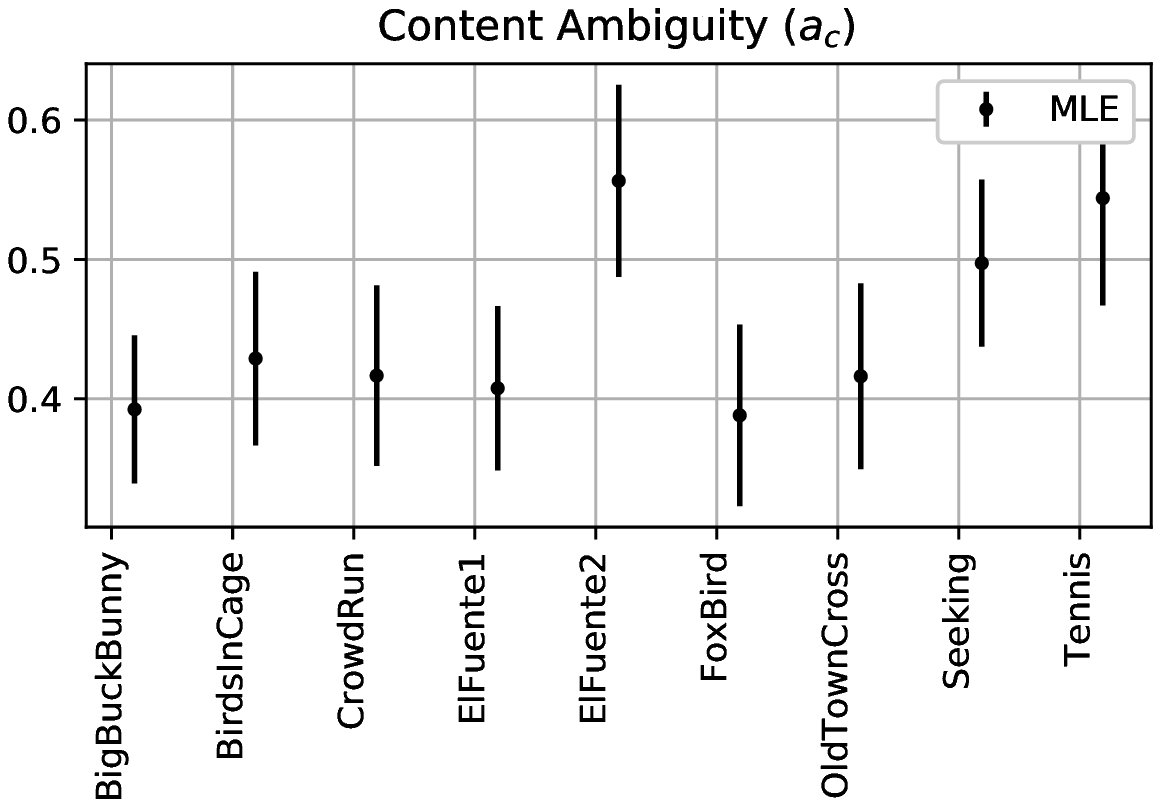}
\par\end{center}%
\end{minipage}

\vspace{-0.15in}

\caption{\label{fig:Sampling-recovery-result}Example recovery result on the
NFLX dataset. Both the estimated parameters and their 95\% confidence interval are shown.}
\end{figure}

\subsection{Convergence}

Next, we evaluate how fast MLE and other methods converge toward the
result recovered on the full dataset, as we increase the number of
test subjects. For each number of subjects, we randomly sample
among all the subjects, and repeated the experiment 100 times.
Figure \ref{fig:Convergence} illustrates the averaged RMSE as a function
of the subject numbers. It is shown that on the NFLX dataset, MLE has the closest quality scores to the full-dataset recovery than other methods for a given number of subjects. On the VQEG dataset, MLE has performance comparable to MOS and SR-MOS.

\begin{figure}

\begin{minipage}[t]{0.499\columnwidth}%
\begin{center}
\includegraphics[width=1\columnwidth]{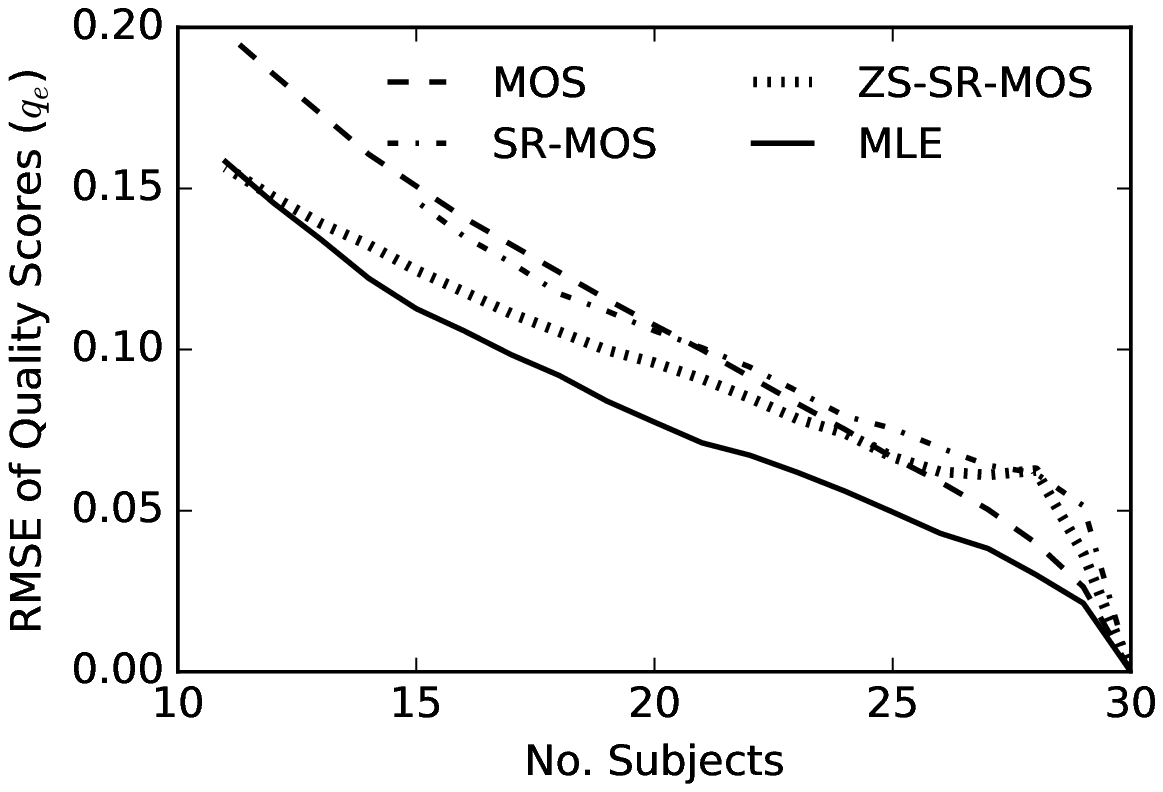}
\par\end{center}%
\end{minipage}%
\begin{minipage}[t]{0.499\columnwidth}%
\begin{center}
\includegraphics[width=1\columnwidth]{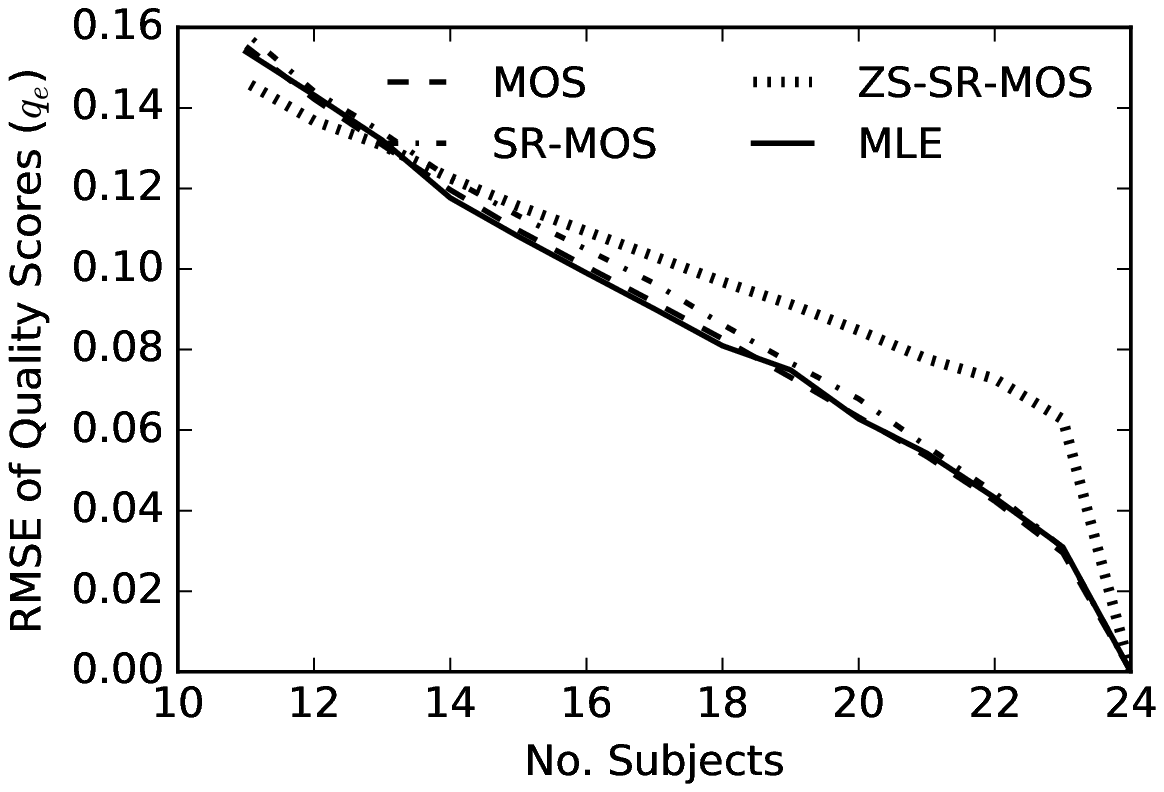}
\par\end{center}%
\end{minipage}

\vspace{-0.20in}

\caption{\label{fig:Convergence}RMSE of the recovered subjective quality scores
as a function of the number of test subjects, on (left) the NFLX dataset
and (right) the VQEG dataset.}\vspace{-0.15in}
\end{figure}

\subsection{Resistance to Corruption}

From the last section, it is evident that MLE shows faster convergence 
than other methods in the presence of data scrambling. To further 
corroborate our speculation, we evaluate how MLE and other
methods behave in the presence of data corruption. For each
dataset, we simulate two cases of corruption: a) subject corruption,
where all the scores corresponding to a number of subjects are scrambled,
and b) random corruption, where a raw score gets replaced by a random
score from 1 to 5 with a probability. Results for
a) and b) are reported in Figure \ref{fig:Resistence-to-Corruption}
and \ref{fig:Random-Corruption}, respectively. 

It can be observed that in the presence of subject corruption, MLE
achieves a substantial gain over the other methods, including the
ones with subject rejection. The reason is that the proposed 
model was able to capture the variance
of subjects explicitly and is able to compensate for it.
On the other hand, the traditional subject rejection scheme 
(Algorithm \ref{subjreject}) was only able to identify part of 
the corrupted subjects. It may also occur that only a subset 
of a subject's scores is unreliable. In that case, discarding all 
of the subject's scores is a waste of valuable subjective data. 
Meanwhile, traditional subject rejection employs a set of 
heuristic steps to determine outliers, which may lack 
interpretability. By contrast, the proposed model naturally 
integrates the various subjective effects together and is 
solved efficiently by our MLE method.

In the presence of random corruption, it can be seen that MLE does not show
any advantage over the other methods. This is because the proposed
model (\ref{eq:proposed}) is incapable of capturing this type of
corruption, hence it could not deal with it effectively.

\begin{figure}

\begin{minipage}[t]{0.499\columnwidth}%
\begin{center}
\includegraphics[width=1\columnwidth]{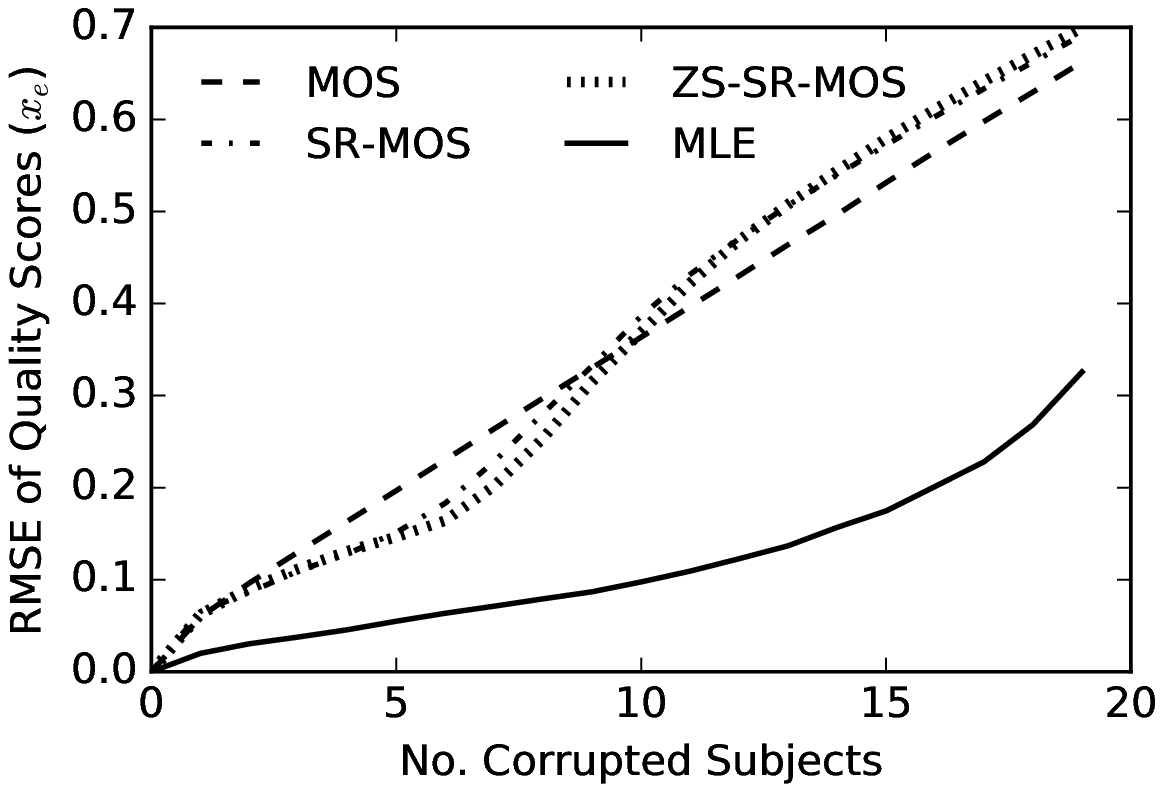}
\par\end{center}%
\end{minipage}%
\begin{minipage}[t]{0.499\columnwidth}%
\begin{center}
\includegraphics[width=1\columnwidth]{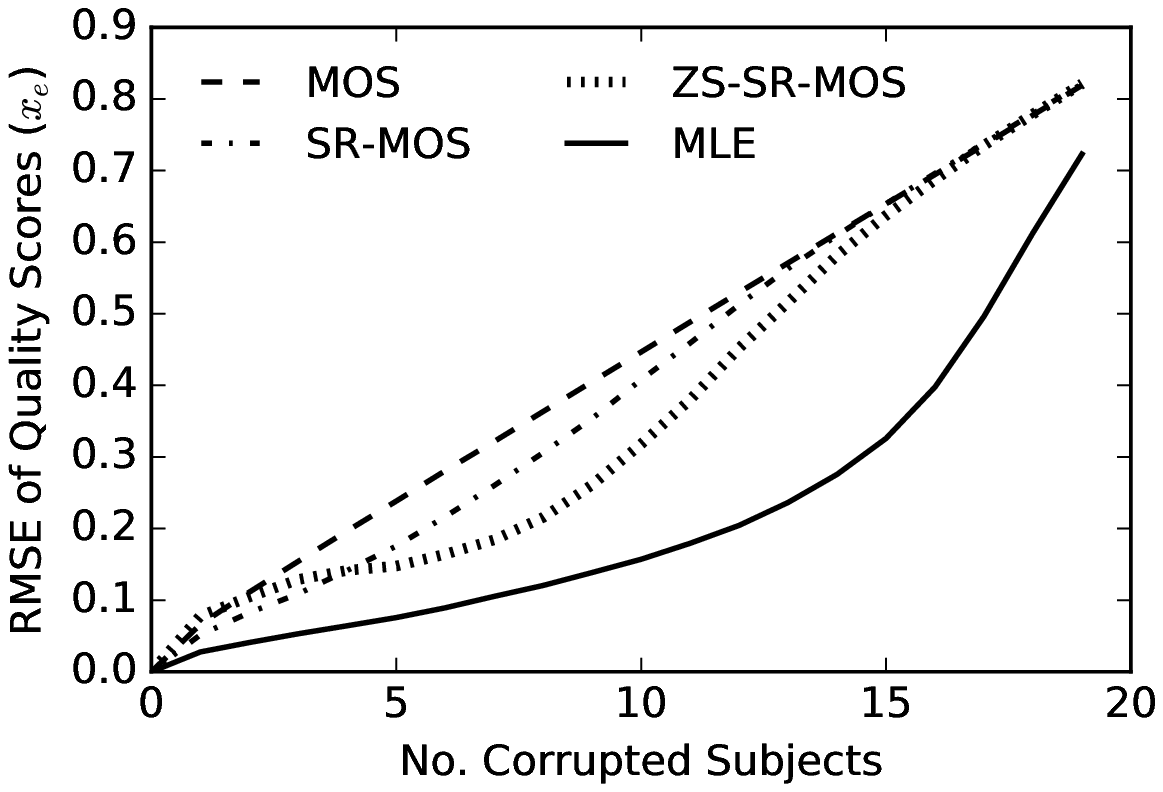}
\par\end{center}%
\end{minipage}

\vspace{-0.20in}

\caption{\label{fig:Resistence-to-Corruption}RMSE of the recovered subjective
quality scores as a function of the number of corrupted test subjects,
on (left) the NFLX dataset and (right) the VQEG dataset.}
\end{figure}

\begin{figure}

\begin{minipage}[t]{0.499\columnwidth}%
\begin{center}
\includegraphics[width=1\columnwidth]{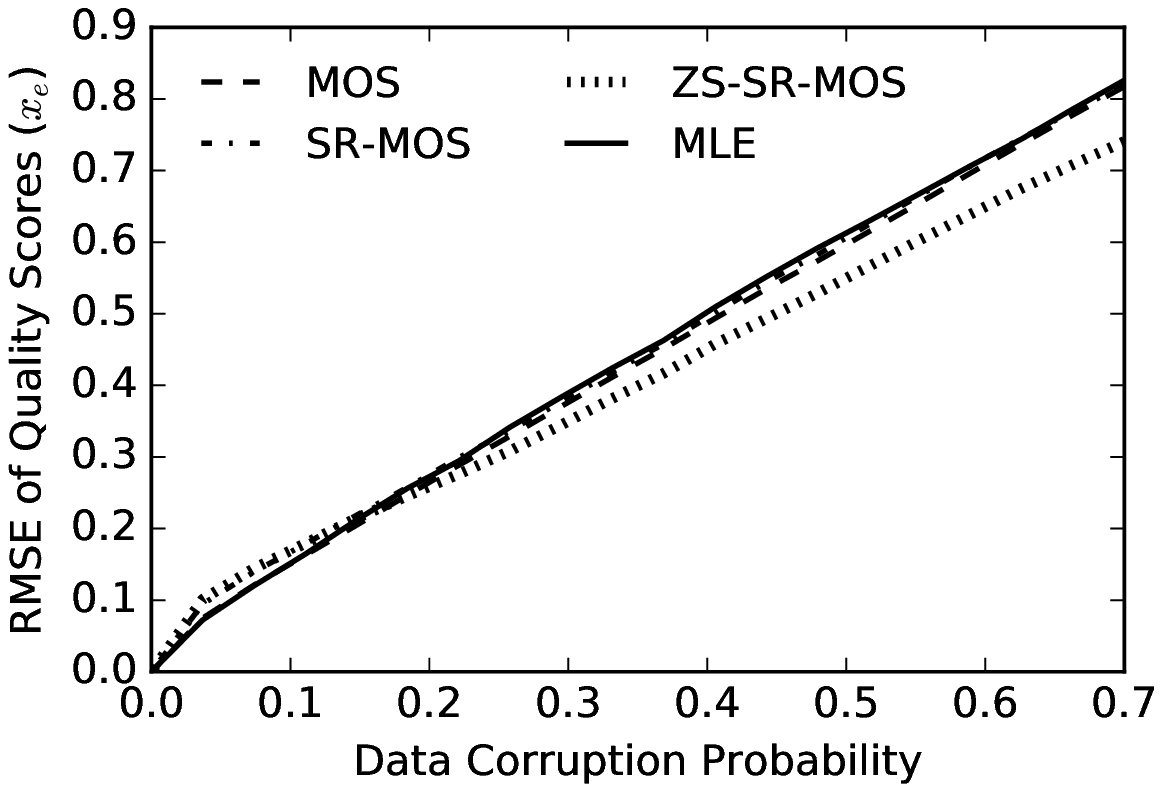}
\par\end{center}%
\end{minipage}%
\begin{minipage}[t]{0.499\columnwidth}%
\begin{center}
\includegraphics[width=1\columnwidth]{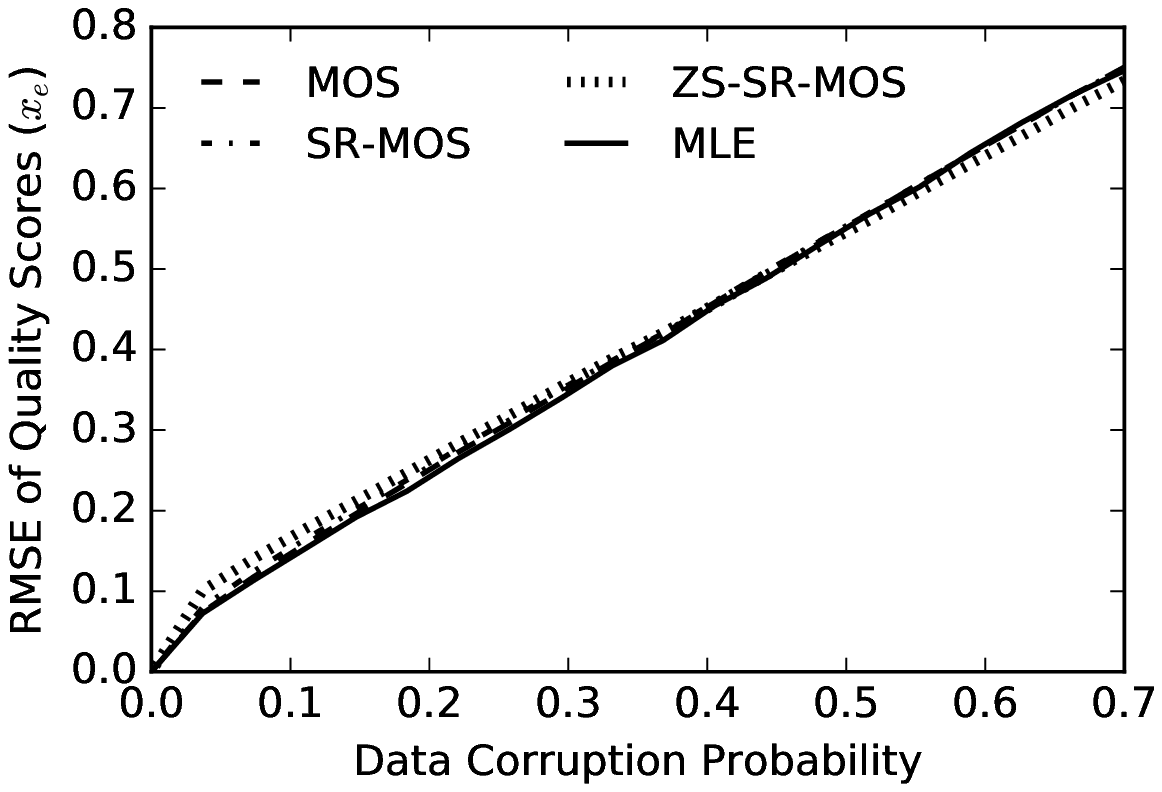}
\par\end{center}%
\end{minipage}

\vspace{-0.20in}

\caption{\label{fig:Random-Corruption}RMSE of the recovered subjective quality
scores as a function of the random score corruption probability, on
(left) the NFLX dataset and (right) the VQEG dataset. }
\end{figure}

\subsection{Selective Sampling}

We also evaluate the MLE and other methods under selective sampling,
where each subject only rates a part of all videos. We performed this step by assigning
each raw score a random probability of presence. As the probability
increases, more scores are sampled. The performance is reported in
Figure \ref{fig:Selective-Sampling}. On the NFLX dataset, again,
MLE has clear advantage over other methods. On the VQEG dataset, MLE
also shows gain over other methods. Since MLE accounts for the full
information, randomly missing some data points does not affect its predictive performance by much. By contrast, the MOS, SR-MOS, ZS-SR-MOS methods, which make local decisions on partial information, are greatly affected.

\begin{figure}

\begin{minipage}[t]{0.499\columnwidth}%
\begin{center}
\includegraphics[width=1\columnwidth]{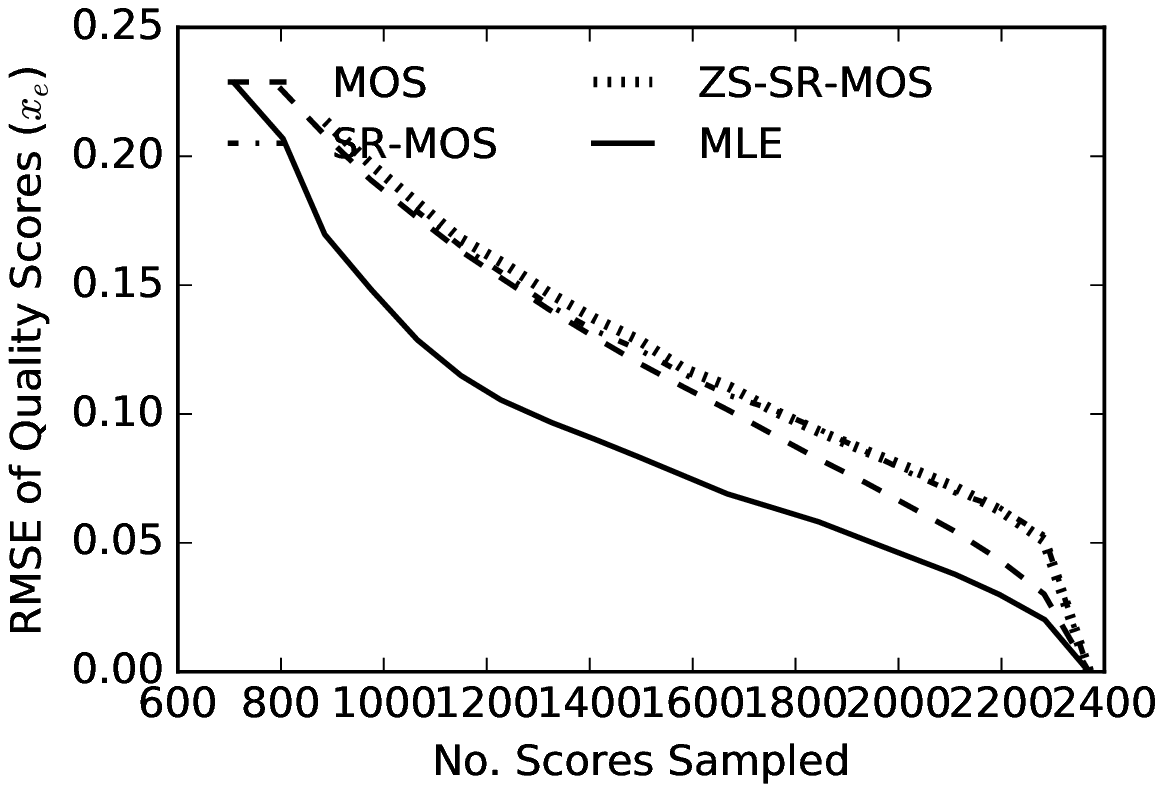}
\par\end{center}%
\end{minipage}%
\begin{minipage}[t]{0.499\columnwidth}%
\begin{center}
\includegraphics[width=1\columnwidth]{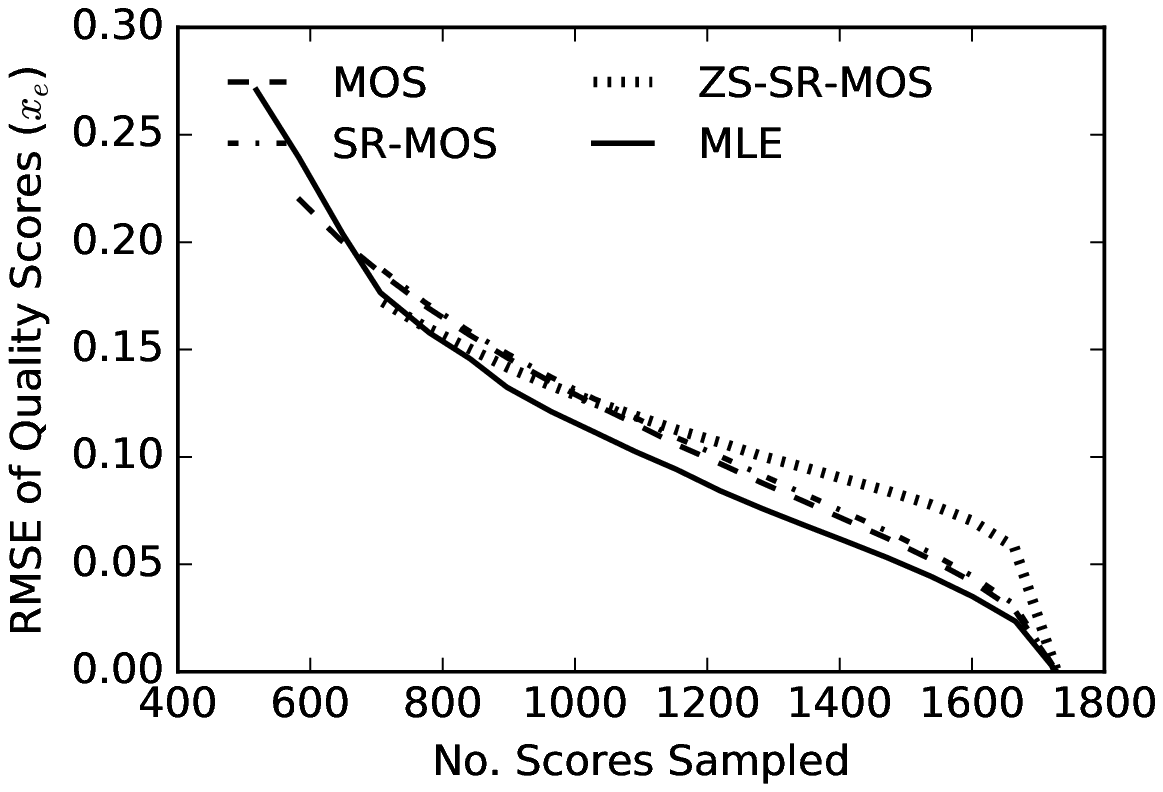}
\par\end{center}%
\end{minipage}

\vspace{-0.20in}

\caption{\label{fig:Selective-Sampling}RMSE of the recovered subjective quality
scores as a function of the number of scores randomly sampled, on
(left) the NFLX dataset and (right) the VQEG dataset. }
\end{figure}

\section{Summary and Future Work\label{sec:Conclusions}}

We have presented a new approach to process raw opinion scores collected
in video quality subjective testing, by using a generative model
that jointly captures the quality of impaired videos, bias and inconsistency
of subjects and the ambiguity of contents. We determined the model parameters
by formulating a MLE problem and devising a belief propagation solution. It was shown
that the recovered parameters were able to capture the subjective effects (bias, inconsistencies etc.) of the problem at hand, and that the proposed solution outperformed other methods in terms of its resistance to subject corruption, tighter confidence interval, better handling of missing data and provision of side information on the test 
subjects and video contents.

We list a number of directions to further this work: 1) The
recovered side information on the test subjects and video contents can
be used actively in the process to train an objective quality metric
to yield more accurate prediction. 2) The proposed model can
be further extended to other test methods, such as pairwise comparison.
3) A theoretical analysis on the belief propagation algorithm can uncover its region of convergence.

\Section{References}
\bibliographystyle{IEEEtran}
\bibliography{refs}

\appendix

\section{Analysis of Z-score Transformation\label{sec:analysis-z-score}}

Let us assume a simplified version of model (\ref{eq:proposed}):
\begin{eqnarray*}
X_{e,s} & = & x_{e}+B_{e,s}\\
B_{e,s} & \sim & \mathcal{N}(b_{s},v_{s}^{2}).
\end{eqnarray*}
Let $y_{e,s}=x_{e,s}-\mu_{s}$ be the mean-subtracted version of $x_{e,s}$,
and let $B_{e,s}=b_{s}+v_{s}N_{e,s}$ where $N_{e,s}\sim\mathcal{N}(0,1)$.
The corresponding random variable $Y_{e,s}$ has the form:

\begin{eqnarray}
Y_{e,s} & = & X_{e,s}-\frac{1}{E}\sum_{e'}X_{e',s}\nonumber \\
 & = & x_{e}+b_{s}+v_{s}N_{e,s}-\frac{1}{E}\sum_{e'}(x_{e'}+b_{s}+v_{s}N_{e's})\nonumber \\
 & = & (x_{e}-\frac{1}{E}\sum_{e'}x_{e'})+v_{s}N_{e,s}-\frac{1}{E}v_{s}\sum_{e'}N_{e's}\nonumber \\
 & = & y_{e}+\frac{E-1}{E}v_{s}N_{e,s}-\frac{1}{E}v_{s}\sum_{e':e'\neq e}N_{e's}\label{eq:ye}\\
 & \simeq & y_{e}+v_{s}N_{e,s},\label{eq:ye2} 
\end{eqnarray}
where in (\ref{eq:ye}) we define $y_{e}=x_{e}-\mu_{x}$ and $\mu_{x}=\frac{1}{E}\sum_{e'}x_{e'}$, and (\ref{eq:ye2}) is due to $E\gg1$.
Z-score transformation computes the z-score $z_{e,s}=(x_{e,s}-\mu_{s})\left/\sigma_{s}\right.=y_{e,s}\left/\sqrt{\frac{1}{E}\sum_{e'}y_{e',s}^{2}}\right.$.
The corresponding random variable $Z_{e,s}$ has the form:
\begin{eqnarray}
Z_{e,s} & = & \frac{Y_{e,s}}{\sqrt{\frac{1}{E}\sum_{e'}Y_{e',s}^{2}}}\nonumber \\
 & \simeq & \frac{y_{e}+v_{s}N_{e,s}}{\sqrt{\frac{1}{E}\sum_{e'}(y_{e'}+v_{s}N_{e',s})^{2}}}\nonumber \\
 & \simeq & \frac{y_{e}+v_{s}N_{e,s}}{\sqrt{\frac{1}{E}\sum_{e'}y_{e'}{}^{2}+\frac{1}{E}v_{s}^{2}\sum_{e'}N_{e',s}^{2}}}\label{eq:yep} \\
 & = & \frac{x_{e}-\mu_{x}+v_{s}N_{e,s}}{\sqrt{\sigma_{x}^{2}+\frac{1}{E}v_{s}^{2}\sum_{e'}N_{e',s}^{2}}}\label{eq:sigmae}\\
 & \neq & \frac{x_{e}-\mu_{x}}{\sigma_{x}},\nonumber 
\end{eqnarray}
where (\ref{eq:yep}) is due to that $\sum_{e}y_{e}=0$ and in (\ref{eq:sigmae}) we define $\sigma_{x}=\sqrt{\frac{1}{E}\sum_{e}y_{e}^{2}}$.
The last inequality suggests that there are non-vanishing terms related
to subjects that cannot be canceled out by z-score transformation.
In other words, z-scoring can only partially compensate subjects'
bias and inconsistency.

\section{Analysis of Observations and Unknowns in (\ref{eq:proposed})\label{sec:recoverability}}

We consider the observations and unknowns in  (\ref{eq:proposed}). Assume
a typical scenario, where the subjective experiment has $S=30$ subjects,
$C=20$ contents, and $E=200$ impaired videos. The number of unknowns
is thus $E+2S+C=280$ and the number of observations is $ES=6000$.
At a first glance, we may conclude that the number of observations
are much greater than the number of unknowns, and the problem has
a well-conditioned solution.

Sounds pretty good? Let's take a closer look. This time, consider a simplified model 
by removing the variation terms $\{v_{s}\}$ and $\{a_{c}\}$, i.e., 
\[
x_{e,s}=x_{e}+b_{s}.
\]
After the variation terms are removed, everything is deterministic.
Since the relationship is linear, we hope to find a least-squares
solution of a linear system with $ES$ observations and $E+S$ unknowns.
Let $u=[\{x_{e}\}_{e=1}^{E},\{b_{s}\}_{s=1}^{S}]{}^{T}$ be the vector
of unknowns, and $o=[\{x_{e,s}\}]{}^{T}$ be the vector of observations,
and let $\Omega=[\Omega_{i,j}]$ be a $ES\times(E+S)$ matrix, where $\Omega_{e+(s-1)E,e}=1$
and $\Omega_{e+(s-1)E,s+E}=1$ for $e=1,...,E$
and $s=1,...,S$, and the rest of the elements are all zeros. We look
to find a least-squares solution to the following linear system:
\[
\Omega\cdot u=o.
\]

\begin{prop}
\label{prop:omega}Matrix $\Omega$ has rank $E+S-1$.\end{prop}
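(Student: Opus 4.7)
The plan is to prove the rank equals $E+S-1$ by a two-sided sandwich: first show $\operatorname{rank}(\Omega) \leq E+S-1$ by exhibiting an explicit nonzero element of $\ker \Omega$, then show $\dim \ker \Omega \leq 1$, so that rank-nullity forces equality. The key structural observation I would exploit is that each row of $\Omega$, indexed by a pair $(e,s)$, has exactly two nonzero entries: a $1$ in column $e$ and a $1$ in column $s+E$. Hence the system $\Omega u = 0$ for $u \in \mathbb{R}^{E+S}$ is equivalent to
\begin{equation*}
u_e + u_{s+E} = 0 \quad \text{for all } e \in \{1,\ldots,E\},\ s \in \{1,\ldots,S\}.
\end{equation*}

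For the upper bound I would exhibit the ``gauge'' null vector $u^{\ast}$ with $u^{\ast}_e = 1$ for $e=1,\ldots,E$ and $u^{\ast}_{s+E} = -1$ for $s=1,\ldots,S$. Each row of $\Omega u^{\ast}$ evaluates to $1 + (-1) = 0$, so $u^{\ast} \in \ker \Omega \setminus \{0\}$, giving $\operatorname{rank}(\Omega) \leq E+S-1$. This direction is precisely the well-known shift ambiguity of the model $x_{e,s} = x_e + b_s$: adding a constant $c$ to every $x_e$ and subtracting $c$ from every $b_s$ leaves all observations unchanged, so at best one can recover $\{x_e\}$ and $\{b_s\}$ up to a common additive constant.

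For the lower bound I would take an arbitrary $u \in \ker \Omega$ and argue that it must be a scalar multiple of $u^{\ast}$. Fixing any $s_0$ and letting $e$ vary in the equation $u_e + u_{s_0+E} = 0$ shows that all coordinates $u_1,\ldots,u_E$ share a common value $c \coloneqq -u_{s_0+E}$. Substituting back into $u_e + u_{s+E} = 0$ for varying $s$ then forces $u_{s+E} = -c$ for every $s$. Therefore $u = c \cdot u^{\ast}$, so $\ker \Omega = \operatorname{span}(u^{\ast})$ and rank-nullity delivers $\operatorname{rank}(\Omega) = (E+S) - 1$.

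I do not anticipate a real obstacle here; the argument is forced by the fact that $\Omega$ is the unsigned vertex-edge incidence matrix of the connected bipartite graph $K_{E,S}$, and in graph-theoretic terms the claim is just that a connected bipartite graph on $E+S$ vertices has incidence matrix of rank $(E+S)-1$ over $\mathbb{R}$. The only subtlety worth flagging is the sign convention when writing down $u^{\ast}$, but the explicit verification $\Omega u^{\ast} = 0$ makes that unambiguous.
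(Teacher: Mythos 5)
Your proof is correct, but it takes a different route from the paper's. The paper argues on the row side: it applies Gaussian elimination, notes that the $E$ rows with $s=1$ are independent, and observes that subtracting the $e$-th row from row $(e,s)$ collapses all remaining rows onto $S-1$ further independent vectors of the form $-\mathbf{e}_{E+1}+\mathbf{e}_{E+s}$, giving exactly $E+S-1$ independent rows. You instead work on the null-space side: you characterize $\ker\Omega$ completely, showing it is exactly $\operatorname{span}(u^{\ast})$ with $u^{\ast}=(1,\ldots,1,-1,\ldots,-1)$, and invoke rank--nullity. Both arguments are valid (yours implicitly uses $E\geq 1$ and $S\geq 1$, which is harmless). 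Your version has two advantages: it delivers the upper and lower bounds on the rank in one self-contained package, whereas the paper's elimination argument leaves the "no more than $E+S-1$" direction somewhat implicit in the observation that the eliminated rows all coincide; and it explicitly exhibits the one-dimensional gauge freedom $(x_e,b_s)\mapsto(x_e+c,\,b_s-c)$, which is precisely the degeneracy the paper then fixes by imposing the extra condition $x_{e=1}=\mu_{e=1}$. The paper's version is shorter and stays closer to the "count independent equations" intuition. Your closing remark identifying $\Omega$ with the incidence matrix of the connected bipartite graph $K_{E,S}$ is a correct and pleasant generalization, and it hints at how the rank statement would change under selective sampling (the rank drops further if the bipartite observation graph disconnects), a point the paper does not make.
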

\begin{proof}
Apply Gaussian elimination. The first $E$ rows are independent.
Subtract the $(e+(s-1)E)$-th row by the $e$-th row, $e=1,...,E$
and $s=1,..,S$, we can get another $(S-1)$ independent rows. In
total, there are $E+S-1$ independent rows in $\Omega$.
\end{proof}
Proposition \ref{prop:omega} implies that we are missing one equation, and without it there are infinite
number of solutions. Fortunately, it is not too difficult to come
up with one additional equation that is reasonable. For example, we
can let the first impaired video have a score equal to its MOS, i.e.,
\[
x_{e=1}=\mu_{e=1}.
\]
We can readily find the least-squares solution after adding the additional condition, 
which is also necessary for solving the full model (\ref{eq:proposed}).

\section{Derivation of Update Rules in Algorithm \ref{em}\label{sec:update-rules}}

This section derives the update rules used in Algorithm \ref{em}. 
Consider $L(\theta)$ derived in (\ref{eq:gaussian}) where $\theta=(\{x_{e}\},\{b_{s}\},\{v_{s}\},\{a_{c}\})$.
The first-order partial derivatives of $L(\theta)$ are expressed
as follows:

\begin{eqnarray*}
\frac{\partial L(\theta)}{\partial x_{e}} & = & \sum_{s}\frac{x_{e,s}-x_{e}-b_{s}}{v_{s}^{2}+a_{\mathsf{c}(e)}^{2}}\\
\frac{\partial L(\theta)}{\partial b_{s}} & = & \sum_{e}\frac{x_{e,s}-x_{e}-b_{s}}{v_{s}^{2}+a_{\mathsf{c}(e)}^{2}}\\
\frac{\partial L(\theta)}{\partial v_{s}} & = & \sum_{e}-\frac{v_{s}}{v_{s}^{2}+a_{\mathsf{c}(e)}^{2}}+\frac{v_{s}(x_{e,s}-x_{e}-b_{s})^{2}}{\left(v_{s}^{2}+a_{\mathsf{c}(e)}^{2}\right)^{2}}\\
\frac{\partial L(\theta)}{\partial a_{c}} & = & \sum_{s,e:\mathsf{c}(e)=c}-\frac{a_{\mathsf{c}(e)}}{v_{s}^{2}+a_{\mathsf{c}(e)}^{2}}+\frac{a_{\mathsf{c}(e)}(x_{e,s}-x_{e}-b_{s})^{2}}{\left(v_{s}^{2}+a_{\mathsf{c}(e)}^{2}\right)^{2}}
\end{eqnarray*}
Repeat the differentiation and we get the second-order partial derivatives:

\begin{eqnarray}
\frac{\partial^{2}L(\theta)}{\partial x_{e}^{2}} & = & \sum_{s}-\frac{1}{v_{s}^{2}+a_{\mathsf{c}(e)}^{2}}\label{eq:xe2nd}\\
\frac{\partial^{2}L(\theta)}{\partial b_{s}^{2}} & = & \sum_{e}-\frac{1}{v_{s}^{2}+a_{\mathsf{c}(e)}^{2}}\label{eq:bs2nd}\\
\frac{\partial^{2}L(\theta)}{\partial v_{s}^{2}} & = & \sum_{e}-\frac{a_{\mathsf{c}(e)}^{2}-v_{s}^{2}}{\left(v_{s}^{2}+a_{\mathsf{c}(e)}^{2}\right)^{2}}+\frac{(x_{e,s}-x_{e}-b_{s})^{2}\left(a_{\mathsf{c}(e)}^{4}-2a_{\mathsf{c}(e)}^{2}v_{s}^{2}-3v_{s}^{4}\right)}{\left(v_{s}^{2}+a_{\mathsf{c}(e)}^{2}\right)^{4}}\label{eq:vs2nd}\\
\frac{\partial^{2}L(\theta)}{\partial a_{c}^{2}} & = & \sum_{s,e:\mathsf{c}(e)=c}-\frac{v_{s}^{2}-a_{\mathsf{c}(e)}^{2}}{\left(v_{s}^{2}+a_{\mathsf{c}(e)}^{2}\right)^{2}}+\frac{(x_{e,s}-x_{e}-b_{s})^{2}\left(v_{s}^{4}-2v_{s}^{2}a_{\mathsf{c}(e)}^{2}-3a_{\mathsf{c}(e)}^{4}\right)}{\left(v_{s}^{2}+a_{\mathsf{c}(e)}^{2}\right)^{4}}\label{eq:ac2nd}
\end{eqnarray}
Define $w_{e,s}=\frac{1}{v_{s}^{2}+a_{\mathsf{c}(e)}^{2}}$. Applying
the expressions above to the Newton-Raphson update rules yield:

\begin{eqnarray*}
x_{e}^{new} & = & \frac{\sum_{s}w_{e,s}(x_{e,s}-b_{s})}{\sum_{s}w_{e,s}}\\
b_{s}^{new} & = & \frac{\sum_{e}w_{e,s}(x_{e,s}-x_{e})}{\sum_{e}w_{e,s}}\\
v_{s}^{new} & \text{=} & v_{s}-\frac{\sum_{e}w_{e,s}v_{s}-w_{e,s}^{2}v_{s}(x_{e,s}-x_{e}-b_{s})^{2}}{\sum_{e}w_{e,s}^{2}(a_{\mathsf{c}(e)}^{2}-v_{s}^{2})-w_{e,s}^{4}(x_{e,s}-x_{e}-b_{s})^{2}\left(a_{\mathsf{c}(e)}^{4}-2a_{\mathsf{c}(e)}^{2}v_{s}^{2}-3v_{s}^{4}\right)}\\
a_{c}^{new} & = & a_{c}-\frac{\sum_{s,e:\mathsf{c}(e)=c}w_{e,s}a_{\mathsf{c}(e)}-w_{e,s}^{2}a_{\mathsf{c}(e)}(x_{e,s}-x_{e}-b_{s})^{2}}{\sum_{s,e:\mathsf{c}(e)=c}w_{e,s}^{2}(v_{s}^{2}-a_{\mathsf{c}(e)}^{2})-w_{e,s}^{4}(x_{e,s}-x_{e}-b_{s})^{2}\left(v_{s}^{4}-2v_{s}^{2}a_{\mathsf{c}(e)}^{2}-3a_{\mathsf{c}(e)}^{4}\right)}
\end{eqnarray*}
Note that there is strong intuition behind the expressions
for $x_{e}^{new}$ and $b_{s}^{new}$. In each iteration, $x_{e}^{new}$
is re-estimated, by the sum of opinion scores $x_{e,s}$ with the
currently estimated bias $b_{s}$ removed. Each opinion score is weighted
by $w_{e,s}$, i.e., the higher the variance (subject inconsistency
and content ambiguity), the less reliable the opinion score, hence
less the weight. We can interpret $b_{s}^{new}$ in a similar way.

\section{Derivation of Confidence Interval (\ref{eq:ci})\label{sec:confidence-interval}}

Let $I(\theta)$ be the Fisher information of parameter $\theta$ defined by 
\begin{eqnarray*}
I(\theta)=E\left[\left(\frac{\partial L(\theta)}{\partial\theta}\right)^{2}\right]=-E\left[\frac{\partial^{2}L(\theta)}{\partial\theta^{2}}\right]
\end{eqnarray*}
where the second equation is true if $L(\theta)$ is twice-differentiable and $E\left[\frac{\partial L(\theta)}{\partial\theta}\right]=0$. The Cramer-Rao bound states that the variance of $\hat{\theta}$ is lower-bounded by the reciprocal of $I(\theta)$:
\begin{eqnarray*}
\var(\hat{\theta})\geq\frac{1}{I(\theta)}, 
\end{eqnarray*}
where the equality holds for the Gaussian case. Using the Gaussian assumption and observed Fisher information, we have $\var(\hat{\theta})=\frac{1}{-\frac{\partial^{2}L(\theta)}{\partial\theta^{2}}}$.
Substituting into the 95\% confidence interval $\hat{\theta}\pm1.96\std(\hat{\theta})$, we have (\ref{eq:ci}).

\end{document}